\begin{document}
%
\title{Efficiently decodable insertion/deletion codes for high-noise and high-rate regimes}

\author{\IEEEauthorblockN{Venkatesan Guruswami}
\IEEEauthorblockA{
Carnegie Mellon University\\
Pittsburgh, PA 15143\\
Email: guruswami@cmu.edu}
\and
\IEEEauthorblockN{Ray Li}
\IEEEauthorblockA{
Carnegie Mellon University\\
Pittsburgh, PA 15143\\
Email: ryli@andrew.cmu.edu}
}


%


\maketitle

\begin{abstract}
This work constructs codes that are efficiently decodable 
from a constant fraction of \emph{worst-case} insertion and deletion errors in three parameter settings:
(i) Binary codes with rate approaching 1; (ii) Codes with constant rate for error fraction approaching 1 over fixed alphabet size; and (iii)
Constant rate codes over an alphabet of size $k$ for error fraction approaching $(k-1)/(k+1)$. When errors are constrained to deletions alone,
efficiently decodable codes in each of these regimes were constructed recently.
We complete the picture by constructing similar codes that are
efficiently decodable in the insertion/deletion regime.
\end{abstract}


%

\section{Introduction}

\newcommand\equivlemmatext{
  Let $C\subseteq [k]^n$ be a code, and let $t<n$ be a positive integer.
  The following are equivalent.
  1) $C$ is decodable under up to $t$ insertions.
  2) $C$ is decodable under up to $t$ deletions.
  3) $C$ is decodable under up to $t$ insertions and deletions.
}

This work addresses the problem of constructing error-correcting codes
that can be efficiently decoded from a constant fraction of
\emph{worst-case} insertions and deletions. The main results
generalize analogous results for the situation when errors are
restricted to only deletions.

Coding for a constant fraction of adversarial insertions and deletions has been considered previously
by Schulman and Zuckerman \cite{SZ1999}.
They construct constant-rate binary codes that are efficiently decodable from a small
constant fraction of worst-case insertions and deletions and can also handle a small
fraction of transpositions.


Our work primarily builds off recent results by Guruswami, Bukh, and Wang
\cite{BG2015, GW2015}, which address the construction and efficient decoding
of codes for constant fractions of deletions.
These works establish three results, providing families of codes
with each of the following parameters.
\begin{enumerate}
  \item Families with rate approaching 1 decoding a constant fraction of deletions
  \item Families with constant rate decoding a fraction of deletions approaching 1
  \item Families over a fixed alphabet of size $k$ with constant rate
  and decoding a fraction of deletions approaching $1-\frac{2}{k+1}$ (In particular, one gets binary codes for correcting a deletion fraction approaching $1/3$.)
\end{enumerate}
Over an alphabet of size $k$, it is impossible to have a constant
rate code that corrects a $1-\frac{1}{k}$ fraction of deletions.
The last result establishes that the maximum correctable fraction of
deletions of a constant rate code is $1-\Theta(\frac{1}{k})$.



Combinatorially, decoding a given number of worst-case insertions and deletions
is identical to decoding the same number of worst-case deletions.
This is established in the following lemma,
originally given by Levenshtein \cite{levenshtein1966}.
\begin{lemma}
  \label{equivlemma}
  \equivlemmatext
\end{lemma}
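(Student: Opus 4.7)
The plan is to reduce all three conditions to a single combinatorial condition, namely that $\ell(x,y) < n - t$ for every pair of distinct codewords $x, y \in C$, where $\ell(x,y)$ denotes the length of the longest common subsequence of $x$ and $y$. Once this uniform characterization is established, the three-way equivalence is immediate. I would introduce $\ell(x,y)$ up front and then handle each of the three conditions separately.

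For condition~2, two codewords $x, y$ are confusable under at most $t$ deletions precisely when they share a common subsequence of length at least $n-t$, and this is exactly the statement that $\ell(x,y) \ge n-t$. This is the easy direction and essentially unpacks the definition.

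For condition~1, I would use the dual identity that the shortest common supersequence of two length-$n$ strings has length exactly $2n - \ell(x,y)$: construct it by taking an LCS and interleaving in the remaining characters of each string. Then $x$ and $y$ are confusable under at most $t$ insertions, i.e.\ share a common supersequence of length at most $n+t$, exactly when $2n - \ell(x,y) \le n+t$, which again reduces to $\ell(x,y) \ge n-t$.

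For condition~3, I would use the edit-distance identity $d(x,y) = 2\bigl(n - \ell(x,y)\bigr)$ for length-$n$ strings: one can transform $x$ to $y$ by $n-\ell(x,y)$ deletions followed by $n-\ell(x,y)$ insertions, and no shorter edit sequence exists. Unique decoding from up to $t$ adversarial insertion/deletion errors is equivalent to the edit-distance balls of radius $t$ around codewords being pairwise disjoint, i.e.\ $d(x,y) > 2t$, which once more reduces to $\ell(x,y) < n-t$.

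The only nontrivial step is verifying the two combinatorial identities invoked above (the shortest common supersequence length and the insertion/deletion edit distance formula). Both follow from the same alignment argument between $x$, $y$, and an LCS, so I expect this to be the main, though standard, technical ingredient; the rest is routine unpacking of definitions.
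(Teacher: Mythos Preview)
The paper does not actually prove this lemma; it simply attributes the result to Levenshtein~\cite{levenshtein1966} and states it without argument. Your proposed proof --- reducing all three conditions to the single LCS criterion $\ell(x,y) < n-t$ via the identities for shortest common supersequence length and insertion/deletion edit distance --- is correct and is essentially the classical argument. The paper even alludes to the same LCS-based reasoning in its Preliminaries, where it notes that $\Delta(c,c') = |c|+|c'|-2\LCS(c,c')$ ``for the same reason that Lemma~\ref{equivlemma} is true,'' so your approach is fully in line with what the paper has in mind.
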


Lemma \ref{equivlemma} establishes that
the codes provided in the three constructions must also be
capable of decoding both insertions and deletions.
The task that remains, and which our work addresses, is to construct
codes in the \emph{same parameter settings} that can efficiently
correct a combination of insertions \emph{and} deletions.

The regime under which errors are insertions and deletions is closely related to
\textit{edit-distance} (also known as Levenshtein distance),
which measures errors of a code under insertions, deletions, and substitutions.
A substitution can be viewed as a deletion
followed by an insertion.
Thus, all results established
in the insertion and deletion regime, both constructive and algorithmic,
hold in the edit-distance regime when the number of errors is cut in half, and therefore in the traditional coding theory setting
in which the only errors are substitutions. The edit-distance is a more challenging model, however; while the  Gilbert-Varshamov bound
gives codes over size $k$ alphabets
that can correct up to a fraction of substitutions approaching $\frac{1}{2}(1-\frac{1}{k})$, 
the question of whether there exist positive rate codes capable of correcting a deletion fraction approaching
$1-\frac{1}{k}$ is still open.

\subsection{Prior Results}

These are the efficiently decodable code constructions in the deletion-only regime that
we are generalizing to the insertion/deletion regime.

1) A binary code family of rate $1-\tilde O(\sqrt\eps)$
  that can be efficiently decoded from
  an $\eps$ fraction of worst-case deletions, for all $\eps$ smaller than some absolute constant $\eps_0 > 0$.  Furthermore, the codes are constructible, encodable, and decodable,
  in time $N^{\poly(1/\eps)}$, where $N$ is the block length. [Theorem 4.1 from \cite{GW2015}]
 
2) For any $\eps > 0$, a code family over an alphabet of size 
 $\poly(1/\eps)$ and rate $\Omega(\eps^2)$ that can be decoded from a $1-\eps$ fraction of worst-case deletions.
  Furthermore, this code is constructible, encodable, and decodable in time
  $N^{\poly(1/\eps)}$. [Theorem 3.1 from \cite{GW2015}]
  
3) For all integers $k\ge2$ and all $\eps > 0$, a code family over alphabet size $k$ of positive rate $r(k,\eps) > 0$ 
  that can be  decoded from
  a $1-\frac{2}{k+1}-\eps$ fraction of worst-case deletions in $O_{k,\eps}(N^3(\log N)^{O(1)})$ time.

\subsection{Our Results and Techniques}

\newcommand\higherroronetext{
  Fix an integer $k\ge 2$ and $\eps>0$.
  For infinitely many and sufficiently large $N$,
  there is an explicit code $C\subseteq\{0,1\}^N$
  with rate $r(k,\eps)=(\eps/k)^{O(\eps^{-12})}$
  over a size $k$ alphabet
  that can be decoded from a
  $1-\frac{2}{k+1}-\eps$ fraction of worst-case insertions and deletions
  in time $O_{k,\eps}(N^3\polylog (N))$.
  Furthermore, this code is constructible in time $O_{k,\eps}(N\log^2N)$.
}
\newcommand\higherrortwotext{
  For any $\eps>0$, there exists a family of
  codes over an alphabet
  of size $\poly(1/\eps)$ and rate $\Omega(\eps^5)$
  that can be efficiently decoded from a $1-\eps$ fraction
  of insertions/deletions.
  Furthermore, this code is constructible, encodable, and decodable in time
  $N^{\poly(1/\eps)}$.
}
\newcommand\highratetext{
  There exists a constant $\eps_0>0$ such that the following holds.
  Let $0<\eps<\eps_0$. There is an explicit binary code $C\subseteq \{0,1\}^N$
  with rate $1-\tilde O(\sqrt{\eps})$
  that is decodable from an $\eps$
  fraction of insertions/deletions
  in $N^{\poly(1/\eps)}$ time.
  Furthermore, $C$ can be constructed and encoded
  in time $N^{\poly(1/\eps)}$.
}

Our work constructs the following three families of codes.
\begin{enumerate}
  \item
  alphabet size: 2,
  rate: $1-\tilde O(\sqrt\eps)$,
  insertion/deletion fraction: $\eps$,
  decoding time: $N^{\poly(1/\eps)}$. (Thm.~\ref{highratetheorem})
  \item
  alphabet size: $\poly(1/\eps)$,
  rate: $\Omega(\eps^5)$,
  insertion/deletion fraction: $1-\eps$,
  decoding time: $N^{\poly(1/\eps)}$. (Thm.~\ref{higherrortwotheorem})
  \item
  alphabet size: $k\ge 2$,
  rate: $(\eps/k)^{\mathrm{poly}(1/\eps)}$,
  insertion/deletion fraction: $1-\frac{2}{k+1}-\eps$,
  decoding time: $O_{k,\eps}(N^3\poly\log(N))$.  (Thm.~\ref{higherroronetheorem})
\end{enumerate}

\begin{remark*}
  Theorem \ref{higherrortwotheorem} gives constant rate codes that decode
  from a $1-\eps$ fraction of insertions/deletions.
  This also follows as a corollary from Theorem \ref{higherroronetheorem}.
  However, the rate of the construction in
  Theorem \ref{higherroronetheorem}
  is $(\eps/k)^{\mathrm{poly}(1/\eps)}$, which is far worse than $\poly(\eps)$. The main point of 
\ref{higherroronetheorem} is to highlight the near-tight trade-off between alphabet size and insertion/deletion fraction.
\end{remark*}
\begin{remark*}
  At the expense of slightly worse parameters,
  the construction and decoding complexities
  in Theorems \ref{highratetheorem} and \ref{higherrortwotheorem}
  can be improved to $\poly(N)\cdot(\log N)^{\poly(1/\eps)}$.
  See Theorems \ref{highratefasttheorem} and
  \ref{higherrortwofasttheorem}.
\end{remark*}

Theorems \ref{higherrortwotheorem} and \ref{higherroronetheorem}
use the powerful idea of list decoding, exemplified
in \cite{BG2015}. A normal decoding algorithm
is required to return the exact codeword, but a list decoding algorithm
is allowed to return a list of codewords containing the correct codeword.
The codes for both theorems are decoded by first applying
a list decoding algorithm, and then noting that
if the easier list decoding is gauranteed to succeed
(that is, returns a list containing the correct codeword),
one can simply pass through the resulting list and choose the unique codeword
that has sufficiently small distance from the received word.
The codeword will be unique because the codes constructed
are provably decodable under the required number of insertion/deletions
according to the results in \cite{BG2015, GW2015}.

The extent of difference between the insertion/deletion
decoding algorithms and their deletion-only analogues
varies depending on the parameter setting.
For a $1-\frac{2}{k+1}-\eps$ fraction of insertions/deletions, the decoding algorithm
uses the same list decoding approach
as the deletion-only decoding algorithm in \cite{BG2015}.
For a $1-\eps$ fraction of insertions/deletions, we adopt
the list decoding approach that in fact simplifies the construction presented
in \cite{GW2015}.
For achieving a rate of $1-\eps$, we use the same code as in
\cite{GW2015} with different parameters, but considerably more
bookkeeping is done to provide a provably correct
decoding algorithm. In particular, both
Theorem 4.1 from \cite{GW2015} and
Theorem \ref{highratetheorem}
place chunks of 0s between inner codewords.
However, while identifying buffers
in the received word in the deletion-only case
merely requires identifying long runs of 0s,
identifying buffers in the insertion/deletion case
requires identifying strings of fixed length with sufficiently
small fraction of 1s.

\section{Preliminaries}
Let $[n]$ denote the set $\{1,\dots,n\}$.
For a string $s$, let $|s|$ denote the length of the string.
Define $\Delta(c,c')$ to be the \textit{insertion/deletion distance}
between $c$ and $c'$, that is, the number of insertions/deletions
needed to manipulate $c$ into $c'$.
For two words $c,c'\in C$, let $\LCS(c,c')$ be
the length of the longest common subsequence of $c$ and $c'$. Define
$\LCS(C)= \max_{c,c'\in C, c\neq c'} \LCS(c,c')$.
For the same reason that Lemma \ref{equivlemma} is true,
we have $\Delta(c,c') = |c|+|c'|-2\LCS(c,c')$.

A code $C$ of block length $n$ over an alphabet
$\Sigma$ is a subset $C\subseteq \Sigma^n$. The rate of $C$
is defined to be $\frac{\log |C|}{n\log|\Sigma|}$.
The encoding function of a code is a map $\Enc:[|C|]\to\Sigma^n$
whose image equals $C$
(with messages identified with $[|C|]$ in some canonical way),
and the decoding function of a code is a map $\Dec:\Sigma^*\to C$.

A code is encodable in time $f(n)$ if, for all elements of $[|C|]$,
the map $\Enc$ can be computed in time $f(n)$.
A code is decodable from $t$ (or, a $\delta$ fraction of)
worst-case insertions and
deletions in time $f(n)$ if, for all $c\in C$, and
for all $s$ such that $\Delta(s, c)\le t$ (or $\delta n$),
$\Dec(s)$ can be computed in time $f(n)$ and evaluates to $c$.
A code is constructible in time $f(n)$ if descriptions of $C, \Dec,$
and $\Enc$ can be produced in time $f(n)$.

Just as in \cite{SZ1999,BG2015, GW2015},
our constructions use the idea of code concatenation:
If $C_{\text{out}}\subseteq \Sigma_{\text{out}}^n$
is an ``outer code'' with encoding function $\Enc_{\text{out}}$,
and $C_{\text{in}}\subseteq \Sigma_{\text{in}}^m$
is an ``inner code'' with encoding function
$\Enc_{\text{in}}:\Sigma_{\text{out}}\to\Sigma_{\text{in}}^m$,
then the concatenated code
$C_{\text{out}}\circ C_{\text{in}}\subseteq \Sigma_{\text{in}}^{nm}$
is a code whose encoding function first applied $\Enc_{\text{out}}$ to the message,
and then applied $\Enc_{\text{in}}$ to each symbol of the resulting outer codeword.

Thoughout the paper,
$c,c'$ denote codewords,
$s,s'$ denote codewords modified under insertions and deletions,
and $w,w'$ denote inner codewords of concatenated codes.
We let $n$ denote the block length of the code, unless we
deal with a concatenated code, in which case $n$ denotes
the block length of the outer code, $m$ denotes the block length of the
inner code, and $N=nm$ denotes the block length of the entire code.
Alphabet sizes are denoted by $k$, and field sizes for outer Reed Solomon
codes are denoted by $q$.

\section{High Rate}
  \begin{lemma}[Proposition 2.5 of \cite{GW2015}]
    \label{denseinnercodelemma}
    Let $\delta,\beta\in(0,1)$. Then,
    for every $m$, there exists a code $C\subseteq\{0,1\}^m$
    of rate $R=1-2h(\delta)-O(\log(\delta m)/m)-2^{-\Omega(\beta m)/m}$
    such that
    \begin{itemize}
      \item for every string $s\in C$, every interval of length $\beta m$
      in $s$, contains at least $\beta m/10$ 1's,
      \item $C$ can be corrected from a $\delta$ fraction of worst-case
      deletions, and
      \item $C$ can be found, encoded, and decoded in time $2^{O(m)}$.
    \end{itemize}
  \end{lemma}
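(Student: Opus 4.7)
The plan is to use the alteration method: start with the set of strings in $\{0,1\}^m$ satisfying the interval condition, then greedily extract a deletion-resilient subcode from it.

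First, let $S \subseteq \{0,1\}^m$ consist of those strings in which every window of length $\beta m$ contains at least $\beta m/10$ ones. Since a uniformly random $m$-bit string has expected $\beta m/2$ ones in any fixed window of length $\beta m$, a Chernoff bound gives a violation probability of $2^{-\Omega(\beta m)}$ for that single window. Union-bounding over the at most $m$ possible window positions, a $1 - m\cdot 2^{-\Omega(\beta m)}$ fraction of $\{0,1\}^m$ lies in $S$; this density filter accounts for the last rate-loss term in the statement.

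Next, build $C \subseteq S$ greedily so that no two codewords share a common subsequence of length $\ge (1-\delta)m$. By the standard characterization of deletion decodability in terms of $\LCS$, this guarantees correction of $\delta m$ worst-case deletions. The key combinatorial estimate is a bound of the form $2^{2h(\delta)m + O(\log(\delta m))}$ on the size of the ``$\LCS$-ball'' around any $c \in \{0,1\}^m$, i.e., the number of $c'$ with $\LCS(c,c') \ge (1-\delta)m$. Granting this, the greedy procedure selects at least $|S|/2^{2h(\delta)m + O(\log(\delta m))}$ codewords, which after taking logarithms is exactly the claimed rate. Because the density condition is inherited by the greedy subset, the first bullet is immediate from $C\subseteq S$.

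For the algorithmic bullet, I enumerate all strings in $\{0,1\}^m$, filter for membership in $S$ using a sliding-window count in $O(m)$ time per string, and maintain $C$ by running the standard $O(m^2)$ $\LCS$ dynamic program against every current codeword before admitting a new candidate. The total running time is $2^{O(m)}$, and $C$, its encoder, and its decoder can all be represented as lookup tables of size $2^{O(m)}$; decoding a received word $r$ amounts to testing which codeword $r$ is a subsequence of. The main obstacle is proving the ball-size estimate with the right exponent: the naive count (choose $\delta m$ positions of $c$ to delete, then choose $\delta m$ positions and symbols to insert) overshoots to $2^{(2h(\delta)+\delta)m}$, so one must argue more carefully---for instance, by fixing a canonical embedding of the common subsequence into $c'$ and observing that the inserted symbols are then largely determined---to drive the ball size down to $2^{2h(\delta)m + o(m)}$. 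Once that counting lemma is in place, the rest of the construction is bookkeeping.
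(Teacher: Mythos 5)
This lemma is imported verbatim from \cite{GW2015} (Proposition 2.5) and is not proved in this paper, so the comparison is against the original argument, which your proposal essentially reproduces: restrict to the set $S$ of window-dense strings (Chernoff plus a union bound over the at most $m$ window positions), then greedily select codewords from $S$ with pairwise $\LCS<(1-\delta)m$, which by the standard equivalence is exactly decodability from $\delta m$ worst-case deletions, and charge each selected codeword with its $\LCS$-ball. The one step you flag as the ``main obstacle,'' the ball bound $2^{2h(\delta)m+O(\log(\delta m))}$, is a classical counting fact due to Levenshtein \cite{levenshtein1966} rather than a genuine difficulty: the number of length-$m$ supersequences of a fixed string of length $(1-\delta)m$ over a $k$-ary alphabet is $\sum_{i=0}^{\delta m}{m\choose i}(k-1)^i$, independent of the string (proved exactly via the canonical leftmost embedding you suggest, under which every non-embedded symbol must differ from the next embedded one, so the supersequence is determined by the inserted positions and one of $k-1$ values at each), which for $k=2$ is at most $(\delta m+1)2^{h(\delta)m}$; multiplying by the at most ${m\choose \delta m}\le 2^{h(\delta)m}$ choices of a common subsequence of $c$ gives the claimed ball size. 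With that counting lemma in place, your rate accounting, the inheritance of the density condition by the subcode, and the $2^{O(m)}$ brute-force construction, table-based encoding, and decode-by-subsequence-test are all correct and match the intended proof.
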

  \begin{theorem}
    \label{highratetheorem}
    \highratetext
  \end{theorem}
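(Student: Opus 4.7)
The plan is to concatenate an outer Reed--Solomon code with the binary inner code from Lemma~\ref{denseinnercodelemma}, separating consecutive inner codewords with buffers consisting entirely of $0$'s. Specifically, take a Reed--Solomon code of rate $1-O(\sqrt\eps)$ over a field of size $q=2^{Rm}$, where the inner code has block length $m=\poly(1/\eps)$ and rate $R=1-\tilde O(\sqrt\eps)$, obtained from Lemma~\ref{denseinnercodelemma} with deletion tolerance $\delta=\Theta(\sqrt\eps)$ and density parameter $\beta=\Theta(\sqrt\eps)$. Between consecutive inner codewords insert a zero-buffer of length $\Theta(\sqrt\eps)m$. The combined rate loss from buffers, the outer code, and the inner code is $\tilde O(\sqrt\eps)$, and since $2^{O(m)}=2^{\poly(1/\eps)}$ the inner code can be found, encoded, and decoded in time that is effectively constant in terms of $N$.

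Decoding proceeds in two stages. In the first, the decoder recovers an approximate segmentation of the received word into buffers and (possibly corrupted) inner codewords. In the deletion-only case of Theorem~4.1 of \cite{GW2015} this step simply looks for long runs of $0$'s, but under insertions the adversary can insert $1$'s into the buffers and runs are no longer a reliable signature. Instead the decoder slides a window of length $\Theta(\sqrt\eps)m$ through the received word and labels a position as ``buffer'' if the $1$-density in the window lies below a threshold (say $1/20$) and ``codeword'' otherwise. By the density property in Lemma~\ref{denseinnercodelemma}, every length-$\beta m$ sub-interval of an uncorrupted inner codeword contains at least $\beta m/10$ ones, so after $\eps N$ adversarial edits only an $O(\sqrt\eps)$ fraction of inner codewords can lose enough $1$'s to be mislabeled, and only an $O(\sqrt\eps)$ fraction of buffers can be promoted by inserted $1$'s. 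One then stitches the labels into a segmentation in which all but an $O(\sqrt\eps)$ fraction of the $n$ inner blocks are correctly delimited. In the second stage, the deletion decoder from Lemma~\ref{denseinnercodelemma} is applied to each putative inner block; by averaging, all but an $O(\sqrt\eps)$ fraction of blocks have received at most $\sqrt\eps m$ edits and are decoded correctly, and then Reed--Solomon decoding over the outer code, which tolerates an $\Omega(\sqrt\eps)$ fraction of errors, recovers the message.

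The main obstacle is carrying out stage one rigorously: one must show that the density-based buffer identifier is robust against a worst-case adversary that interleaves insertions and deletions in order to confuse the buffer--codeword boundary. The accounting---how the sliding window interacts with inserted and deleted $1$'s, how the local labels are stitched into a global segmentation, and why only $O(\sqrt\eps\, n)$ inner blocks can be mis-segmented---is the insertion/deletion analogue of the straightforward ``find long runs of $0$'s'' step in Theorem~4.1 of \cite{GW2015} and is where essentially new work is required. The outer Reed--Solomon decoding and the overall complexity estimate of $N^{\poly(1/\eps)}$ then mirror the deletion-only construction.
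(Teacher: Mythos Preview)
Your outline follows the paper's construction closely, but it omits one ingredient that is essential and not merely a detail: you must encode the \emph{index} $i$ together with the outer symbol, i.e.\ the inner code is applied to the pair $(i,c_i)$ rather than to $c_i$ alone. The paper does exactly this (and takes the outer alphabet to be $\FF_{q^h}$ with $h=1/\eps$ so that carrying the index costs only a $1-O(\eps)$ factor in rate). Without the index, your second stage cannot work: even a single spurious or missing buffer shifts every subsequent decoding window by one slot, so the symbol you decode from the $j$th window is the $(j\pm 1)$st outer symbol. A Reed--Solomon errors-and-erasures decoder sees these shifted symbols as errors at \emph{every} shifted position, not just at the $O(\sqrt\eps\,n)$ positions where the segmentation went wrong. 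Your claim that ``all but an $O(\sqrt\eps)$ fraction of the $n$ inner blocks are correctly delimited'' is therefore not enough to feed the outer decoder; you need each correctly decoded window to also tell you \emph{which} outer coordinate it is, and then discard conflicts. This is precisely what the paper does in step~3 of its decoding algorithm.

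A second, smaller issue: you set $m=\poly(1/\eps)$, which would make $q=2^{Rm}$ and hence the Reed--Solomon block length constant, so you do not get an infinite family. In the paper $m=\Theta((1/\eps)\log n)$, which is why the inner brute-force time $2^{O(m)}$ becomes $N^{\poly(1/\eps)}$ rather than a constant. Once you fix the index-encoding omission and let $m$ grow logarithmically, your sketch of the density-threshold buffer detector and the good/bad-buffer accounting is indeed the right shape and matches the paper's argument.
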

  \begin{proof} 
    With hindsight, let $\eps_0=\frac{1}{121^2}$, and let $0<\eps<\eps_0$.
    Consider the concatenated construction with the outer code being a
    Reed-Solomon code that can correct a $60\sqrt\eps$
    fraction of errors and erasures.
        For each $1\le i\le n$, we replace the $i$th coordinate
    $c_i$ with the pair $(i,c_i)$; to ensure that this doesn't affect the rate much, we take the RS code  to be over $\FF_{q^h}$,
    where $n=q$ is the block length and $h=1/\eps$.
    We encode each outer symbol pair in the inner code,
    defined as follows.

    The inner code is a good binary insertion/deletion code $C_1$
    of block length $m$ decoding a $\delta=40\sqrt\eps<\half$
    fraction of insertions and deletions, such that
    every interval of length $\delta m/16$ in a codeword
    has at least $1/10$ fraction of 1s.
    This code can be found using Lemma \ref{denseinnercodelemma}.
    We also assume each codeword begins and ends with a 1.
    
    Now take our concatenated Reed-Solomon code of block length $mn$,
    and between each pair of adjacent inner codewords of $C_1$,
    insert a \textit{chunk} of $\delta m$ 0s. This gives us our final code $C$
    with block length $N=nm(1+\delta)$.
    
    \begin{lemma}
      The rate of $C$ is $1-\tilde O(\sqrt\eps)$. 
    \end{lemma}
    \begin{proof}
      The rate of the outer RS code
      is $(1-120\sqrt\eps)\frac{h}{h+1}$,
      and the rate of the inner
      code can be taken to be $1-2h(\delta)-o(1)$
      by Lemma \ref{denseinnercodelemma}.
      Adding in the buffers reduces the rate by a factor of $\frac{1}{1+\delta}$.
      Combining these with our choice of $\delta$ gives us a total
      rate for $C$ of $1-\tilde O(\sqrt{\eps})$.
    \end{proof}

    \begin{lemma}
      The code $C$ can be decoded from an $\eps$ fraction of
      insertions and deletions in time $N^{\poly(1/\eps)}$.
    \end{lemma}
    Consider the following algorithm that runs in time
    $N^{\poly(1/\eps)}$ for decoding the received word:

1) Scan from the left of the received word. Every time we encounter
      a substring of length exactly $\delta m$ with at most
      $\frac{1}{160}$ fraction of 1s (or $\delta m/160$ 1s),
      mark it as a \textit{decoding buffer}.
      Then, continue scanning from the end of the buffer and repeat.
      This guarantees no two buffers overlap.
      This takes time $\poly(N)$.
      
\smallskip
  2) Start with an empty set $L$.
      The buffers divide the received word into strings
      which we call \textit{decoding windows}.
      For each decoding window, apply the decoder
      from Lemma \ref{denseinnercodelemma} to recover
      a pair $(i,r_i)$. If we succeed, add this pair to $L$.
      This takes $N^{\mathrm{poly}(1/\eps)}$ time.

\smallskip
3) If for any $i$, $L$ contains multiple pairs with
      first coordinate $i$, remove all such pairs
      from $L$. $L$ thus contains at most one pair
      $(i,r_i)$ for each index $i$. Then apply the RS
      decoding algorithm to the string $r$ whose
      $i$th coordinate is $r_i$ if $(i,r_i)\in L$ and erased
      otherwise. This takes time $\poly(N)$.

    \begin{remark*}
      In the deletion only case, the decoding buffers
      are runs of at least $\delta m/2$ contiguous zeros.
      Runs of consecutive zeros are obviously a poor choice for decoding buffers in the presence of insertions,
      as we can destroy
      any buffer with a constant number of insertions.
    \end{remark*}

    Note that the total number of insertions/deletions we can make is at most
    $(1+\delta)mn\eps <2\eps mn$.

    Suppose our received codeword is
    $s=u_1\circ y_1\circ u_2\circ \cdots\circ u_{n'}$,
    where $y_1,\dots, y_{n'-1}$ are the identified
    decoding buffers and $u_1,\dots, u_{n'}$ are the decoding windows.
    Then consider a canonical mapping from characters of $c$ to
    characters of $s$
    where $u_i$ is mapped to by a substring $t_i$
    of $c$, $y_i$ is mapped to by a string $x_i$, so that
    $c = t_1\circ x_1\circ\cdots\circ t_{n'}$
    and
   $ \Delta(c,s)
        = \sum_{i=1}^{n'}\Delta(u_i,t_i)
        + \sum_{i=1}^{n'-1}\Delta(y_i,x_i)$.

    With our canonical mapping, we can identify $\LCS(c,s)$ many characters in $s$ with characters in $c$.
    Intuitively, these are the characters that are uncorrupted
    when we transform $c$ into $s$ using insertions and deletions.
    Call a received buffer $y_i$ in $s$ a \textit{good decoding buffer}
    (or \textit{good buffer} for short) if
    at least $\frac{3}{4}\delta m$ of its characters are identified with
    characters from a single chunk of $\delta m$ 0s in $c$. Call a decoding buffer
    \textit{bad} otherwise.
    Call a chunk of $\delta m$ 0s in $c$ \textit{good} if at least
    $\frac34 \delta m$ of its zeros map to characters in
    single decoding buffer. Note that there is a natural
    bijection between good chunks in $c$ and good decoding
    buffers in $s$.
    
    \begin{lemma}
      \label{bad-buffer-lemma}
      The number of bad decoding buffers of $s$ is at most $8\sqrt{\eps}n$.
    \end{lemma}
    \begin{proof}
      Suppose we have a bad buffer $y_i$. It either contains characters
      from at least two different chunks of $\delta m$ 0s in $c$ or contains
      at most $\frac{3\delta m}{4}$ characters from a single chunk.
      
      In the first case, $x_i$ must contain characters in two
      different chunks so its length must be at least $m$,
      so $y_i$ must have been obtained from at least
      $m-\delta m>\delta m > 40\sqrt\eps m$ deletions from $x_i$.
      
      In the second case, if $x_i$ has length
      at most $\frac{7\delta m}{8}$
      then the insertion/deletion distance between $x_i$ and $y_i$ is at least
      $\frac{\delta m}{8} = 5\sqrt{\eps}m$.
      Otherwise, $x_i$ has at least $\frac{\delta m}{8}$ charaters
      in some inner codeword of $c$,
      so $x_i$ has at least $\frac{\delta m}{80}$ 1s,
      so we need at least
      $\frac{\delta m}{80} - \frac{\delta m}{160} = \frac14\sqrt\eps m$
      deletions to obtain $y_i$ from $x_i$.

      By a simple counting argument,
      the total number of bad buffers we can have is
      at most $\frac{2\eps mn}{\frac14 \sqrt\eps m} = 8\sqrt\eps n$. 
    \end{proof}

    \begin{lemma}
      \label{good-buffer-lemma}
      The number of good decoding buffers of $s$
      is at least $(1-8\sqrt{\eps})n$.
    \end{lemma}
    \begin{proof}
      It suffices to prove the number of good chunks of $c$ is at least
      $(1-8\sqrt\eps)n$.
      If a chunk is not mapped to a good buffer, at least
      one of the following is true.
      \begin{enumerate}
        \item The chunk is ``deleted'' by inserting enough 1s.
        \item Part of the chunk is mapped to a bad buffer that
        contains characters from $t-1\ge 1$ other chunks.
        \item Part of the chunk is mapped to a bad buffer
        that contains no characters from other chunks.
      \end{enumerate}
      In the first case, we need at least
      $\frac{\delta m}{160}=\frac14\sqrt\eps m$ insertions to delete
      the chunk.
      In the second case, creating the bad buffer costs at least
      $(t-1)(m-\delta m)\ge \frac{t\delta m}{2}$ deletions,
      which is at least $20\sqrt\eps m$ deletions per chunk.
      In the third case, creating the bad buffer costs at least
      $\frac14\sqrt\eps m$ edits by the argument
      in Lemma \ref{bad-buffer-lemma}.
      Thus, we have
      at most $\frac{2\eps mn}{\frac14 \sqrt\eps m} = 8\sqrt\eps n$
      bad chunks, so we have at least $(1-8\sqrt\eps)n$ good chunks,
      as desired.
    \end{proof}

    Since there are at least $(1-8\sqrt\eps)n$ good decoding buffers
    and at most $8\sqrt\eps n$ bad decoding buffers,
    there must be at least $(1-16\sqrt\eps)n$ pairs of consecutive
    good decoding buffers. For any pair of consecutive good decoding
    buffers $y_{j-1},y_j$ in $s$, the corresponding two good chunks of
    $\delta m$ 0s in $c$ are consecutive unless there is at least one
    bad chunk in between the two good chunks, which happens
    for at most $8\sqrt\eps n$ pairs. Thus,
    there are at least $(1-24\sqrt\eps)n$ pairs of consecutive
    good decoding buffers in $s$
    such that the corresponding good chunks of 0s in $c$
    are also consecutive.

    Now suppose $w$ is an inner codeword between two good chunks
    with corresponding consecutive good decoding buffers, $y_{j-1},y_j$.
    The corresponding decoding window between the decoding buffers
    is $u_j$, mapped to from $t_j$, a substring of $c$. We claim
    that most such $w$ are decoded correctly.

    For all but
    $2\frac{2\eps mn}{\delta m/8} + 2\frac{2\eps mn}{\delta m/8}
    + \frac{2\eps mn}{\delta m/4} < 2\sqrt\eps n$ 
    choices of $j$, we have
    $\Delta(x_{j-1},y_{j-1})\le \frac{\delta m}{8}$,
    $\Delta(x_{j},y_{j})\le \frac{\delta m}{8}$,
    and $\Delta(t_{j},u_{j})\le \frac{\delta m}{4}$.
    When we have an inner codeword $w$ and an index $j$ such that
    all these are true, we have
    $|x_{j-1}|, |x_j|\le \frac{9\delta m}{8}$,
    and each of $x_{j-1},x_j$ shares at least
    $\frac{3\delta m}{4}$ characters
    with one of the chunks of $\delta m$ 0s neighboring
    $w$. It follows that $x_{j-1},x_j$ each contain at most
    $\frac{3\delta m}{8}$ characters of $w$. Additionally,
    by the definition of a good chunk, $u_j$ contains
    at most $\frac{\delta m}{4}$ characters in each of the
    chunks neighboring $w$.
    Thus, we have $\Delta(w,t_j)\le \frac{3\delta m}{4}$,
    in which case, 
    $\Delta(w,t_j)\le \Delta(w,t_j)+\Delta(t_j,u_j)\le \delta m$.
    Thus, for at least $(1-24\sqrt\eps)n - 2\sqrt\eps n = (1-26\sqrt\eps)n$
    inner words $w$, there exists $j\in\{1,\dots, n'\}$
    such that $\Delta(w,u_j)\le \delta m$.

    Therefore, our algorithm detects at least $(1-26\sqrt\eps)n$ correct
    pairs $(i,r_i)$.
    Since our algorithm detects at most $(1+8\sqrt\eps)n$
    pairs total, we have at most $34\sqrt\eps n$ incorrect pairs.
    Thus, after removing conflicts, we have at least $(1-60\sqrt\eps)n$
    correct values, so our Reed Solomon decoder will succeed.
  \end{proof} 

\vspace{-2ex}
  \begin{remark*}
    Our decoding algorithm succeeds as long as the
    inner code can correct up to a $\delta$ fraction of insertions/deletions
    and consists of codewords such that every interval of length
    $\delta m/16$ has at least $1/10$ fraction of 1s.
The time complexity of
    Theorem \ref{highratetheorem} can be improved using a more
    efficient inner code, at the cost of reduction in rate.

    Because of the addition of buffers, the code of Theorem \ref{highratetheorem}
    may not be dense enough to use as an inner code.
    The inner code needs to have $1/10$ fraction of 1s 
    for every interval of length $\delta m/16$.
    However, we can modify the construction of the inner concatenated code so that
    the inner codewords of the inner code in Theorem \ref{highratetheorem}
    have at least $1/5$ fraction of
    1s in every interval of length $\delta m/16$.
    This guarantees that the inner codewords of our two level construction
    have sufficiently high densities of 1s.
    This is summarized in the following theorem.
  \end{remark*}
  \begin{theorem}
    \label{highratefasttheorem}
    There exists a constant $\eps_0>0$ such that the following holds.
    Let $\eps_0>\eps > 0$.
    There is an explicit binary code $C\subseteq\{0,1\}^N$ that
    is decodable from an $\eps$ fraction of insertions/deletions
    with rate $1-\tilde O(\sqrt[4]{\eps})$ in time
    $\poly (N)\cdot(\log N)^{\poly(1/\eps)}$.
  \end{theorem}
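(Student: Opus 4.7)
The plan is to reapply Theorem \ref{highratetheorem}'s construction a second time, with its own output reused as the inner code, so that the dominant inner work occurs at polylogarithmic block length. Concretely, I would fix a top-level error fraction $\delta=40\sqrt{\eps}$, a top-level inner block length $m=\polylog N$, and take the inner code to be an instance of Theorem \ref{highratetheorem} that produces a binary code $C_1\subseteq\{0,1\}^m$ decoding a $\delta$-fraction of insertions/deletions. By Theorem \ref{highratetheorem}, such a $C_1$ has rate $1-\tilde O(\sqrt{\delta})=1-\tilde O(\eps^{1/4})$ and can be constructed, encoded, and decoded in time $m^{\poly(1/\delta)}=(\log N)^{\poly(1/\eps)}$. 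The outer code and the $\delta m$-long zero buffers between consecutive inner codewords are chosen exactly as in the proof of Theorem \ref{highratetheorem}.

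Second I would check that the top-level decoding analysis from the proof of Theorem \ref{highratetheorem} carries over verbatim. Inspecting that analysis (and the remark preceding the present theorem), only two properties of the inner code are used: decodability from a $\delta$-fraction of insertions/deletions, which is immediate, and the requirement that every length-$\delta m/16$ interval of any $C_1$-codeword contain at least a $1/10$ fraction of ones. The latter is the only nontrivial issue: since $C_1$ is itself a concatenated code of the form (inner-inner codeword) $\circ$ (zero buffer) $\circ \cdots$, its internal buffers dilute the density of ones. As indicated in the remark, the fix is to instantiate the inner-inner code with density $1/5$ of ones on sufficiently short intervals (a mild strengthening of Lemma \ref{denseinnercodelemma}), and to take its block length $m'=O(\log m)$ so that $m'(1+\delta')\ll \delta m/16$. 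Then every length-$\delta m/16$ window of a $C_1$-codeword spans many (inner-inner $+$ buffer) blocks, and its one-density is at least $(1/5)(1-O(\delta'))\ge 1/10$ for $\eps$ small enough.

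Finally the rate and runtime follow by combining contributions. The outer Reed--Solomon code contributes rate $1-O(\sqrt{\eps})$, the top-level buffer factor $1/(1+\delta)$ costs another $1-O(\sqrt{\eps})$, and the dominant loss comes from the inner code rate $1-\tilde O(\eps^{1/4})$, so the total rate is $1-\tilde O(\sqrt[4]{\eps})$. Decoding performs a $\poly(N)$ buffer-identification scan, $O(N/m)$ invocations of the inner decoder at cost $(\log N)^{\poly(1/\eps)}$ each, and a $\poly(N)$ Reed--Solomon step, yielding the claimed $\poly(N)\cdot (\log N)^{\poly(1/\eps)}$ runtime. The hard part will be step two: making the density bookkeeping carry through the extra level of concatenation, and checking that the correctness proof of Theorem \ref{highratetheorem} genuinely depends on no structural property of the inner code beyond decodability and the density of ones.
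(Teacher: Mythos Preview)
Your proposal is correct and follows essentially the same approach as the paper: the paper itself only sketches this theorem in the preceding remark, proposing to use the code of Theorem~\ref{highratetheorem} as the inner code in a second application of the same concatenation scheme, and noting that the only obstacle is the density-of-ones requirement, which is handled by boosting the inner-inner code's density from $1/10$ to $1/5$. Your plan spells out exactly this, including the density bookkeeping and the rate/runtime accounting, so there is nothing to add.
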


\section{High Noise}
  Because our decoding algorithms
  for the $1-\eps$ and $1-\frac{2}{k+1}-\eps$ insertion/deletion
  constructions use the same list decoding technique,
  we abstract out the technical part of the decoding algorithm with
  the following theorem.

  \begin{theorem}
    \label{insertion-deletion-thm}
    Let $C$ be a code over alphabet of size $k$ and length $N=nm$
    obtained by concatenating a Reed-Solomon $C_{out}$ of length $n$
    with an inner code $C_{in}$ of length $m$.
    Suppose $C_{out}$ has rate $r$ and is over $\FF_q$ with $n=q$.
    Suppose $C_{in}:[n]\times\FF_q\to[k]^m$ can correct a
    $1-\delta$ fraction of insertions and deletions in $O(t(n))$ for some function $t$.
    Then, provided $C$ is (combinatorially) decodable under up to $1-\delta-4r^{1/4}$ fraction of insertions
    and deletions, it is in fact decodable in time $O(N^3\cdot(t(N)+\polylog N))$.
  \end{theorem}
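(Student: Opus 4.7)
The plan is to adopt the list-decoding paradigm of \cite{BG2015}. The algorithm has three stages. First, for every substring $s'$ of the received word $s$, apply the inner decoder of $C_{in}$ to $s'$; whenever it succeeds, record the returned pair $(i,v) \in [n] \times \FF_q$ in a multiset $L$. Second, apply the Guruswami--Sudan Reed--Solomon list decoder to $L$ to obtain a polynomial-size list $\mathcal{L}$ of candidate outer messages (polynomials of degree $<rn$). Third, encode each candidate $\hat c \in \mathcal{L}$ through $C$, compute $\Delta(\hat c, s)$ via standard LCS dynamic programming, and return the unique $\hat c$ within insertion/deletion distance $(1-\delta-4r^{1/4})N$ of $s$. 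The existence and uniqueness of such a $\hat c$ are guaranteed by the combinatorial decodability hypothesis on $C$.

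For correctness I would fix a canonical alignment between the transmitted codeword $c = w_1 \circ \cdots \circ w_n$ and the received word $s$ that witnesses $\LCS(c,s)$, inducing a partition $s = s_1 \circ \cdots \circ s_n$ with $\sum_{i=1}^n \Delta(w_i, s_i) = \Delta(c,s) \le (1-\delta-4r^{1/4})N$. Call index $i$ \emph{good} if $\Delta(w_i, s_i) \le (1-\delta)m$; then the assumed $(1-\delta)$-fraction decoder for $C_{in}$ applied to $s_i$ outputs $(i, c_i)$, so $(i, c_i) \in L$. A simple averaging argument bounds the number of bad indices by $\frac{1-\delta-4r^{1/4}}{1-\delta}\, n$, giving at least $g \ge 4 r^{1/4} n$ good indices, and therefore at least $g$ correct pairs in $L$.

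The heart of the proof is the Guruswami--Sudan step: the decoder recovers all polynomials of degree $<rn$ agreeing with more than $\sqrt{rn \cdot |L|}$ pairs of $L$, so the true message polynomial $f^*$ appears in $\mathcal{L}$ whenever $g > \sqrt{rn \cdot |L|}$, i.e.\ $|L| < g^2/(rn) = 16 n/\sqrt{r}$. The main technical obstacle is controlling $|L|$, since feeding every one of the $O(N^2)$ substrings of $s$ into the inner decoder can easily exceed this budget; I would therefore restrict the inner decoder to a carefully chosen collection of substrings (for instance those of length within a bounded multiplicative window of $m$, sampled via a partition-based scheme in the style of the deletion-only decoders) so that each good index is still covered while $|L|$ stays below threshold. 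Once $f^* \in \mathcal{L}$, the combinatorial decodability hypothesis singles it out as the unique element of $\mathcal{L}$ within the claimed radius of $s$, so scanning the $\poly(N)$-size list with LCS-based distance checks finishes the decoding. The runtime then decomposes into $\poly(N)$ inner-decoder calls at cost $t(N)$ each, the $\tilde O(N^3)$ cost of Guruswami--Sudan over a length-$n$ Reed--Solomon code, and $\poly(N)$ for the final verification, totaling $O(N^3(t(N)+\polylog N))$ as claimed.
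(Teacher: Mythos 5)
Your overall architecture matches the paper's proof exactly: decode inner windows to collect pairs, run Reed--Solomon list decoding on the collected pairs, then use the combinatorial decodability hypothesis to pick the unique candidate within distance $(1-\delta-4r^{1/4})N$ of the received word. But the step you explicitly defer --- ``restrict the inner decoder to a carefully chosen collection of substrings \dots sampled via a partition-based scheme'' --- is precisely the technical heart of the theorem, and leaving it unspecified is a genuine gap. The paper's resolution is concrete: with $\gamma = 4r^{1/4}$, it only decodes the windows $\sigma_{j,j'}$ whose endpoints lie on a grid of spacing $\gamma m/2$ and whose lengths are at most roughly $2m$ (indices $0\le j\le \lceil 2n/\gamma\rceil$, $1\le j'\le\lceil 4/\gamma\rceil$), and it adds a pair to $\calJ$ only when the window decodes \emph{uniquely}. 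This gives $|\calJ|\le 8n/\gamma^2$, which is what makes the agreement threshold $\sqrt{2(rn)|\calJ|} \le n r^{1/4}$ fall below the guaranteed $\gamma n/2 = 2r^{1/4}n$ correct pairs; without an explicit bound of this form the list-decoding step simply cannot be invoked, and your ``multiplicative window of $m$'' suggestion by itself still leaves $\Theta(Nm)$ windows, far above the budget $O(n/\sqrt r)$ you computed.

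There is a second, related quantitative slip: you declare index $i$ good when $\Delta(w_i,s_i)\le(1-\delta)m$ and then feed the (unknown) block $s_i$ to the inner decoder. Since the algorithm never knows the true partition, it can only decode a grid window containing $s_i$, which costs up to an extra $\gamma m/2$ in insertion/deletion distance; with your definition of good the total can then exceed the inner decoding radius $(1-\delta)m$. The paper reserves this slack up front by calling $i$ good only when $\Delta(b_i,s_i)\le(1-\delta-\gamma/2)m$, which still leaves at least $\gamma n/2$ good indices by the same averaging argument, and guarantees that the covering window is within $(1-\delta)m$ of $b_i$ so the unique inner decoding returns $(\alpha_i,f(\alpha_i))$. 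So your proposal has the right skeleton (and the final pruning step and runtime accounting are fine), but both the explicit window grid and the slack-adjusted definition of good indices need to be supplied before the agreement-versus-list-size inequality, and hence the whole proof, actually closes.
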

  \begin{proof}
    Let $\gamma=4r^{1/4}$.
    Consider the following algorithm, which takes as input
    a string $s$ that is the result of changing a codeword
    $c$ under a fraction $\le (1-\delta-\gamma)$ of insertions/deletions.

    1) $\calJ\leftarrow\emptyset$.

\smallskip
2) For each $0\le j\le \ceil{\frac{2n}{\gamma}}$
      $1\le j'\le\ceil{\frac{4}{\gamma}}$, do the following.
      \begin{enumerate}
        \item[a)] Let $\sigma_{j,j'}$ denote the substring
        from indices $\frac{\gamma m}{2}j$ to $\frac{\gamma m}{2}(j+j')$.
        \item[b)] By brute force search over $\FF_q\times \FF_q$, find all pairs
        $(\alpha,\beta)$ such that 
        $\Delta(\Enc_{C_{in}}((\alpha,\beta)),\sigma_{j,j'})\le (1-\delta) m$.
        If exactly one such pair $(\alpha,\beta)$ exists,
        then add $(\alpha,\beta)$ to $\calJ$.
      \end{enumerate}

3) Find the list, call it $\calL$, of all polynomials
      $p\in\FF_q[X]$ of degree less than $rn$ such that
      $\abs{\{(\alpha,p(\alpha))|\alpha\in\FF_q\}\cap\calJ}\ge \frac{\gamma n}{2}$.

4) Find the unique polynomial in $\calL$, if any,
      such that the insertion/deletion distance between its encoding under $C$ and $s$ is at most $(1-\gamma-\delta)N$.
 
    \textsc{Correctness.}
    Break the codeword $c\in[k]^{nm}$ of the concatenated code $C$ into
    $n$ inner blocks, with the $i$th block $b_i\in[k]^m$
    corresponding to the inner encoding of the $i$th symbol $(\alpha_i,f(\alpha_i))$
    of the outer Reed-Solomon code.
    For some fixed canonical way of forming $s$ out of $c$,
      let $s_i$ be the block formed out of $b_i$, so that
      $s_1,\dots,s_n$ partition the string $s$.
      Call an index $i$ \textit{good} if
      it can be obtained from $b_i$ by at most $(1-\delta-\frac{\gamma}{2})m$
      insertions or deletions, and \textit{bad} otherwise.
    The number of bad indices is at most
    $\frac{(1-\delta-\gamma)mn}{(1-\delta-\gamma/2)m}\le (1-\frac{\gamma}{2})n$,
    so the number of good indices is at least $\frac{\gamma n}{2}$.
    
    For any good index $a$, there exists some $\sigma_{j,j'}$ such that $s_a$
    is a substring of $\sigma_{j,j'}$ and $0<|\sigma_{j,j'}|-|s_a|<\frac{\gamma
    m}{2}$. Since $a$ is good, the insertion/deletion distance
    between $b_a$ and $s_a$
    is at most $(1-\delta-\gamma/2)m$, and the insertion/deletion distance between
    $s_a$ and $\sigma_{j,j'}$ is less than $\gamma m/2$, so the insertion/deletion
    distance between $b_a$ and $\sigma_{j,j'}$ is at most $(1-\delta)m$.
    Since $C_{in}$
    can handle up to $(1-\delta)m$ insertions and deletions,
    it follows that $b_a$ is the unique codeword of $C_{in}$
    such that  $\Delta(\Enc_{C_{in}}(b_a),\sigma_{j,j'})\le (1-\delta) m$.
    Since $b_a$ is the encoding of
    $(\alpha_a,f(\alpha_a))$ under $C_{in}$,
    we conclude that for any good index $a$, the pair
    $(\alpha_a,f(\alpha_a))$ will be included in $\calJ$.
    In particular, $\calJ$ will have at least $\gamma n / 2$
    such pairs, so the correct $f$ will be in $\calL$.

    We now check that step 3 of the algorithm will succeed.
    We have $|\calJ|\le \frac{2n}{\gamma}\cdot \frac{4}{\gamma} = \frac{8n}{\gamma^2}$,
    and Sudan's list decoding algorithm will give
    a list of degree-less-than-$rn$ polynomials over $\FF_q$ such that
    $(\alpha,p(\alpha))\in\calJ$ for more than
    $\sqrt{2(rn)|\calJ|}$ values of $\alpha\in\FF_q$ \cite{sudan1997}.
    Furthermore, this list will have at most $\sqrt{2|\calJ|/(rn)}$
    elements. For our choice of $\gamma$, we have
    $\gamma n/2 > \sqrt{\frac{16rn^2}{\gamma^2}} \ge \sqrt{2(rn)|\calJ|}$,
    so the list decoding will succeed.

    By above, there will be at least one polynomial in $\calL$
    such that the longest common subsequence of its encoding
    with $s$ has length at least $(\gamma+\delta)m$,
    namely the correct polynomial $f$.
    Since we assumed $C$ can decode
    up to a $1-\delta-\gamma$ fraction of insertons/deletions,
    all other polynomials in $\calL$ will have longest common subsequence with $s$
    smaller than $(\gamma+\delta)m$.
    Thus our algorithm returns the correct $f$.

    \textsc{Runtime.} We have $O(n)\le O(N)$
    intervals $\sigma_{j,j'}$ to check, and each one
    brute forces over $n^2$ terms of $\FF_q\times \FF_q$.
    Encoding takes time $O(t(n))\le O(t(N))$ by assumption and
    computing the longest common subsequence takes $O(m^2) = O(\log^2N)$ time,
    so in total the second step of the algorithm takes $O(N^3(t(N)+\log^2 N))$ time.
    Since $|\calJ|\le O(N)$ for sufficiently large $N$,
    the Reed-Solomon list decoding algorithm can be performed in time $O(N^2)$,
    see for instance \cite{RR2000}.
    There are a constant number of polynomials to check at the end,
    and each one takes $O(N^2)$ time using the longest common subsequence algorithm.
    Thus, the overall runtime of the algorithm is $O(N^3(t(N)+\polylog N))$.
  \end{proof}

\subsection{Decoding against $1-\eps$ insertions/deletions} 
  \begin{lemma}
    \label{weakconcatlemma}
    Suppose have a code $C$ which is the concatenation of an outer code
    $C_{out}$ of length $n$ with an inner code $C_{in}$ of length $m$.
    Suppose further that for some $\Delta,\delta\in(0,1)$,
    we have $\LCS(C_{out})\le \Delta n, \LCS(C_{in})\le \delta m$.
    Then $\LCS(C)\le (\Delta+2\delta)nm$.
  \end{lemma}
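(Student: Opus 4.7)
The plan is to start with two distinct codewords $c, c' \in C$, with corresponding outer codewords $\bar c, \bar c' \in C_{out}$ (necessarily distinct, since $c$ is determined by $\bar c$), and analyze an optimal LCS alignment at the level of inner blocks. Writing $c = w_1 \circ \cdots \circ w_n$ and $c' = w'_1 \circ \cdots \circ w'_n$ with $w_i, w'_j \in C_{in}$, I would record for each matched character pair in the LCS alignment the outer block pair $(i, j) \in [n] \times [n]$ it lies in, and define $\ell_{i, j}$ to be the number of matches in block pair $(i, j)$, so that $\LCS(c, c') = \sum_{i, j} \ell_{i, j}$.

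The first step is to observe that $S := \{(i, j) : \ell_{i, j} > 0\}$ is a (totally ordered) chain in the product order on $[n]^2$, because the LCS alignment is order-preserving on characters. This immediately gives $|S| \le 2n - 1$. I would then partition $S = G \sqcup B$ into good pairs, where $w_i = w'_j$ (equivalently $\bar c_i = \bar c'_j$), and bad pairs. The bad contribution is easy to handle: for $(i, j) \in B$, $\ell_{i, j} \le \LCS(w_i, w'_j) \le \delta m$ by the inner-code hypothesis, and summing over at most $2n - 1$ pairs gives a total bounded by $2\delta nm$.

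The main obstacle is bounding the good contribution $\sum_G \ell_{i, j}$ by $\Delta nm$. The plan is to combine two natural constraints on $\{\ell_{i, j} : (i, j) \in G\}$: the row constraint $\sum_{j : (i, j) \in G} \ell_{i, j} \le m$ and the column constraint $\sum_{i : (i, j) \in G} \ell_{i, j} \le m$, both arising because each character of $w_i$ or $w'_j$ is matched at most once. This turns the problem into a transportation LP with uniform capacity $m$, which by standard LP duality is upper bounded by $m \cdot \tau(G)$, where $\tau(G)$ is the maximum bipartite matching size in $G$. The crucial structural observation is that because $G \subseteq S$ lies inside a chain in the product order, every bipartite matching in $G$ is automatically a strictly monotone chain: sorting by $i$-coordinate gives weakly increasing $j$-coordinates from the chain property, and distinctness of matched $j$-values promotes this to strict. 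Such a strict chain of pairs with $\bar c_i = \bar c'_j$ is precisely a common subsequence of $\bar c$ and $\bar c'$, so $\tau(G) \le \LCS(\bar c, \bar c') \le \Delta n$. Combining the two contributions yields $\LCS(c, c') \le \Delta nm + 2\delta nm = (\Delta + 2\delta) nm$, as desired.
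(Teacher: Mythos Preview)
The paper states this lemma without proof (it is treated as known, presumably from the cited prior work), so there is no in-paper argument to compare against. Your proof is correct. The decomposition of an optimal alignment into block pairs along a chain $S$ with $|S|\le 2n-1$, together with the inner-code bound $\ell_{i,j}\le \delta m$ on bad pairs, is the standard setup; the substantive step is your handling of the good pairs. Bounding $\sum_{(i,j)\in G}\ell_{i,j}$ via the fractional-matching LP (equivalently, via K\"onig's theorem: take a minimum vertex cover $U\cup V$ of the bipartite graph $G$ with $|U|+|V|=\nu(G)$ and sum the row and column constraints over it) is exactly the right strength---a naive greedy cover of the chain $G$ only produces $2\nu(G)$ rows and columns and would yield $2\Delta$ in place of $\Delta$ in the final bound. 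Your observation that any bipartite matching inside a chain is strictly increasing in both coordinates, and hence witnesses a common subsequence of $\bar c,\bar c'$ of that length, then gives $\nu(G)\le \LCS(\bar c,\bar c')\le \Delta n$ and finishes the argument. One cosmetic point: you write $\tau(G)$ for the maximum matching size, whereas $\tau$ conventionally denotes the vertex-cover number; since the two coincide in bipartite graphs by K\"onig, this does not affect correctness.
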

  \begin{lemma}[$\theta=1/3$ case of Corollary 2.6 of \cite{GW2015}]
    \label{higherrorcorollarylemma}
    Let $1/2>\eps>0$, and $k$ be a positive integer.
    For every $m$, there exists a code $C\subseteq [k]^m$ of rate
    $R=\eps/3$ that can correct a $1-\eps$ fraction of insertions/deletions
    in time $k^{O(m)}$, provided $k\ge 64/\eps^3$.
  \end{lemma}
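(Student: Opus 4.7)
The plan is to reduce the insertion/deletion decoding problem to a combinatorial longest-common-subsequence condition and then to construct the code by a greedy/volume argument. By Lemma~\ref{equivlemma}, it suffices to build a code $C\subseteq[k]^m$ of rate $\eps/3$ that decodes from $(1-\eps)m$ worst-case deletions. A standard reformulation is that such a $C$ exists if and only if $\LCS(c,c')<\eps m$ for every pair of distinct codewords: any received word after $(1-\eps)m$ deletions has length $\eps m$ and is a common subsequence of the true codeword and of any other codeword with which the decoder might confuse it.

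To build a code with $\LCS(C)<\eps m$, I would use a greedy procedure: process the strings of $[k]^m$ in some fixed order, adding a candidate $s$ to $C$ whenever $\LCS(s,c')<\eps m$ for every $c'$ already picked. For a fixed $c'$, the number of $s\in[k]^m$ with $\LCS(s,c')\ge\eps m$ is at most $\binom{m}{\eps m}^2 k^{(1-\eps)m}$: choose the $\eps m$ matching positions in both $c'$ and $s$, then fill the remaining $(1-\eps)m$ symbols of $s$ freely. Greedy therefore succeeds to select at least $k^{\eps m/3}$ codewords as long as $k^{\eps m/3}\cdot\binom{m}{\eps m}^2\cdot k^{(1-\eps)m}<k^m$, i.e., $\binom{m}{\eps m}^2\le k^{2\eps m/3}$. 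Using $\binom{m}{\eps m}\le(e/\eps)^{\eps m}$, this reduces to $k\ge(e/\eps)^3\approx 20/\eps^3$, and the stated hypothesis $k\ge 64/\eps^3$ provides comfortable slack.

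Decoding is by brute force: given a received word $s$, enumerate all $c\in C$, compute $\Delta(c,s)$ via the standard $O(m^2)$ dynamic program, and return the unique $c$ with $\Delta(c,s)\le(1-\eps)m$. Uniqueness follows from the $\LCS$ property of $C$ together with Lemma~\ref{equivlemma}. Since $|C|\le k^m$, decoding runs in $k^{O(m)}$ time, and the greedy construction runs in the same time budget.

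The main technical step is the counting bound in the greedy analysis and the tuning of constants so that the hypothesis $k\ge 64/\eps^3$ suffices; everything else (reduction to LCS via Lemma~\ref{equivlemma} and enumeration-based decoding) is straightforward. Since the statement is flagged as the $\theta=1/3$ specialization of Corollary~2.6 of \cite{GW2015}, one could alternatively just invoke that corollary and verify the substitution produces the claimed parameters.
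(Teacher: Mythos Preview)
The paper does not prove this lemma; it is quoted verbatim as the $\theta=1/3$ case of Corollary~2.6 of \cite{GW2015} and used as a black box. Your reconstruction is correct and is essentially the Gilbert--Varshamov-style argument that underlies the cited corollary: reduce unique insertion/deletion decoding to the condition $\LCS(c,c')<\eps m$ via Lemma~\ref{equivlemma}, bound the number of strings sharing a long common subsequence with a fixed word by $\binom{m}{\eps m}^{2}k^{(1-\eps)m}$, and run a greedy selection, with the constraint $k\ge(e/\eps)^3$ ensuring the target size $k^{\eps m/3}$ is reached. The only (harmless) imprecision is the sentence ``any received word after $(1-\eps)m$ deletions has length $\eps m$'': the length is at least $\eps m$, but the conclusion that such a word cannot be a subsequence of a second codeword still follows from $\LCS(c,c')<\eps m$. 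Your final sentence---that one may simply invoke the cited corollary and check the substitution---is exactly what the paper does.
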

  \begin{theorem}
    \label{higherrortwotheorem}
    \higherrortwotext
  \end{theorem}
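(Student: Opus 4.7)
The plan is to instantiate Theorem~\ref{insertion-deletion-thm} with a Reed--Solomon outer code and the insertion/deletion-tolerant inner code supplied by Lemma~\ref{higherrorcorollarylemma}. I would set $\delta = \eps/3$ and pick the outer RS rate $r = (\eps/6)^4 = \Theta(\eps^4)$, so that $\delta + 4r^{1/4} = \eps$, matching the error-fraction target of Theorem~\ref{insertion-deletion-thm}. Lemma~\ref{higherrorcorollarylemma} then supplies an inner code $C_{\mathrm{in}}$ of rate $\delta/3 = \eps/9$ over an alphabet of size $k = 64/\delta^3 = \Theta(1/\eps^3) = \poly(1/\eps)$ that decodes a $1-\delta$ fraction of insertions/deletions in time $k^{O(m)}$. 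Choosing the inner block length $m = \Theta(\log n / (\eps \log(1/\eps)))$ ensures $C_{\mathrm{in}}$ can represent the $[n] \times \FF_q$ inputs demanded by Theorem~\ref{insertion-deletion-thm} and that $k^{O(m)} = N^{\poly(1/\eps)}$, keeping the inner subroutines inside the claimed time budget.

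The key verification is that the concatenated code $C$ is combinatorially decodable from a $1-\eps$ fraction of insertions and deletions, which is the hypothesis of Theorem~\ref{insertion-deletion-thm}. By Lemma~\ref{higherrorcorollarylemma}, $\LCS(C_{\mathrm{in}}) \le \delta m$. For the outer code, since each RS symbol is paired with its index before being fed into $C_{\mathrm{in}}$, any common subsequence of two outer codewords $((i, f(i)))_{i=1}^n$ and $((i, g(i)))_{i=1}^n$ must pick the same index at each step, forcing $f$ and $g$ to agree at that position. Hence $\LCS(C_{\mathrm{out}})$ equals the maximum Hamming agreement of two distinct degree-$<rn$ polynomials over $\FF_q$, which is at most $rn$. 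Lemma~\ref{weakconcatlemma} then yields $\LCS(C) \le (r + 2\delta)\, nm < \eps N$, which is exactly the combinatorial decodability condition at a $1-\eps$ fraction.

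With combinatorial decodability in hand, Theorem~\ref{insertion-deletion-thm} immediately produces a decoding algorithm running in time $O(N^3(k^{O(m)} + \polylog N)) = N^{\poly(1/\eps)}$, and the construction/encoding inherit the same bound from Lemma~\ref{higherrorcorollarylemma}. For the rate: the index attachment on outer symbols costs a factor of $\log q / \log(nq) = 1/2$ when $n = q$, so the overall rate is $\tfrac{1}{2} \cdot r \cdot (\delta/3) = \Omega(\eps^5)$, as required; the alphabet size is $k = \poly(1/\eps)$.

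The main obstacle is threading the parameter balance: one simultaneously needs $\delta + 4r^{1/4} \le \eps$ (so Theorem~\ref{insertion-deletion-thm} triggers at the correct error fraction) and $r + 2\delta < \eps$ (so Lemma~\ref{weakconcatlemma} certifies the combinatorial bound $\LCS(C) < \eps N$), without letting the rate drop below $\Omega(\eps^5)$. Taking $\delta = \eps/3$ and $r = \Theta(\eps^4)$ handles all three constraints, with only the $1/2$ factor from the index attachment appearing as an additional rate loss.
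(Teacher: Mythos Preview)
Your proposal is correct and mirrors the paper's own proof: both instantiate Theorem~\ref{insertion-deletion-thm} with a Reed--Solomon outer code of rate $\Theta(\eps^4)$ and the Lemma~\ref{higherrorcorollarylemma} inner code correcting a $1-\Theta(\eps)$ fraction, then invoke Lemma~\ref{weakconcatlemma} for the combinatorial $\LCS$ bound. The only differences are cosmetic parameter choices (the paper uses $\delta=\eps/4$, $r=(\eps/8)^4$ where you use $\eps/3$, $(\eps/6)^4$) and that you spell out more explicitly why the index attachment forces $\LCS(C_{\mathrm{out}})\le rn$.
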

  \begin{proof}
    Let $n=q, m=24\log q/\eps,$ and $k=O(1/\eps^3)$.
    By Lemma \ref{higherrorcorollarylemma}, we can construct by brute force
    a code $C_1:n\times\FF_q\to [k]^m$ that
    can be decoded from $1-\eps/4$ fraction of worst-case insertions and
    deletions.
    We can concatenate $C_1$ with an outer Reed-Solomon code of
    rate $(\eps/8)^4$.

    The rate of the inner code is $\Omega(\eps)$, and the rate of the
    outer code is $\Omega(\eps^4)$, so the total rate is $\Omega(\eps^5)$.

    By Lemma \ref{weakconcatlemma}, $\LCS(C)\le (\eps/8)^4+2(\eps/4)<\eps$,
    so $C$ is capable of decoding up to $1-\eps$ fraction of insertions
    and deletions.
    Encoding in $C_1$ is done by brute force in time $N^{\poly(1/\eps)}$,
    so by Theorem \ref{insertion-deletion-thm}, $C$
    is capable of decoding up to
    $1-\eps/4 - 4((\eps/8)^4)^{1/4} > 1-\eps$ fraction of worst-case
    insertions and deletions in time
    $O(N^3(N^{\poly(1/\eps)}+\poly\log N)) = N^{\poly(1/\eps)}$, as desired.
  \end{proof}
  \begin{remark*}
    Our construction only requires that the inner code can be decoded from $1-\eps/4$
    fraction of worst-case insertions and deletions.
    By using the concatenated code of Theorem \ref{higherrortwotheorem}
    as the inner code of the same construction (thus giving us two
    levels of concatenation), we can reduce the time complexity significantly,
    at the cost of a polynomial reduction in other parameters of the code, as summarized below. 
  \end{remark*}
  \begin{theorem}
    \label{higherrortwofasttheorem}
    For any $\eps>0$, there exists a family of constant rate
    codes over an alphabet
    of size $\poly(1/\eps)$ and rate $\Omega(\eps^9)$
    that can be decoded from a $1-\eps$ fraction
    of insertions/deletions.
    Furthermore, this code is constructible, encodable, and decodable
    in time $\poly (N)\cdot(\log N)^{\poly(1/\eps)}$.
  \end{theorem}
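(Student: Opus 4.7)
The plan is to run exactly the construction of Theorem \ref{higherrortwotheorem}, but bootstrap: instead of taking the inner code to be a brute-force code furnished by Lemma \ref{higherrorcorollarylemma} (which is what made the inner encoding/decoding cost $N^{\poly(1/\eps)}$), take the inner code to itself be the concatenated code of Theorem \ref{higherrortwotheorem}. Because the inner code now lives on a polylogarithmic block length, all of its brute-force steps become tolerable, while the outer structure is unchanged.

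Concretely, let $n=q$ and pick the outer code to be a Reed-Solomon code over $\FF_{q}$ of rate $(\eps/8)^4$, exactly as in Theorem \ref{higherrortwotheorem}. For the inner code, invoke Theorem \ref{higherrortwotheorem} with parameter $\eps'=\eps/4$ to obtain a code $C_{\text{in}}: [n]\times \FF_q \to [k]^m$ of rate $\Omega(\eps^5)$ over alphabet $[k]$ with $k=\poly(1/\eps)$, decoding from a $1-\eps/4$ fraction of insertions/deletions. To encode a pair $(i,\alpha)\in [n]\times\FF_q$ (carrying $2\log q$ bits of information) at rate $\Omega(\eps^5)$ over an alphabet of size $\poly(1/\eps)$, the inner block length is $m = O(\log q /\eps^5 \log(1/\eps)) = \poly(\log N, 1/\eps)$, which is polylogarithmic in the overall block length $N=nm$. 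The inner code is therefore constructible and encodable in time $m^{\poly(1/\eps)} = (\log N)^{\poly(1/\eps)}$, and decodable in time $t(N)=(\log N)^{\poly(1/\eps)}$ as well, since Theorem \ref{higherrortwotheorem} has decoding complexity polynomial in its own block length raised to $\poly(1/\eps)$.

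For the rate, the outer code contributes $\Omega(\eps^4)$ and the inner code contributes $\Omega(\eps^5)$, so the overall rate is $\Omega(\eps^9)$ over an alphabet of size $\poly(1/\eps)$, as claimed. For combinatorial decodability, Lemma \ref{weakconcatlemma} gives $\LCS(C) \le (\eps/8)^4 + 2(\eps/4) < \eps$, so $C$ can correct a $1-\eps$ fraction of insertions/deletions in the existence sense. For efficient decoding, apply Theorem \ref{insertion-deletion-thm} with $\delta = \eps/4$ and $r=(\eps/8)^4$: the recoverable fraction is $1-\delta - 4r^{1/4} = 1-\eps/4-\eps/2 > 1-\eps$, and the runtime is $O(N^3(t(N)+\polylog N)) = \poly(N)\cdot (\log N)^{\poly(1/\eps)}$ since $t(N)=(\log N)^{\poly(1/\eps)}$.

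The only subtle point, and the main thing I would verify carefully, is that the inner code's block length really is polylogarithmic in $N$ rather than polynomial: this hinges on $m$ scaling linearly in $\log q$ (not in $q$) in Theorem \ref{higherrortwotheorem}, which in turn comes from the $m=24\log q/\eps$ choice in its proof. Once that is confirmed, the brute-force cost $m^{\poly(1/\eps)}$ of constructing the inner inner-code, the $m^{\poly(1/\eps)}$ cost of inner decoding inside Theorem \ref{insertion-deletion-thm}, and the $m^2$ cost of each longest-common-subsequence computation all collapse to $(\log N)^{\poly(1/\eps)}$ factors, and the claimed time bound follows.
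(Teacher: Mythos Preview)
Your proposal is correct and matches the paper's approach exactly: the remark preceding Theorem~\ref{higherrortwofasttheorem} states precisely that one should use the concatenated code of Theorem~\ref{higherrortwotheorem} as the inner code of the same construction, yielding two levels of concatenation. Your rate calculation ($\Omega(\eps^4)\cdot\Omega(\eps^5)=\Omega(\eps^9)$), your invocation of Lemma~\ref{weakconcatlemma} and Theorem~\ref{insertion-deletion-thm}, and your identification of the key point that $m=\poly(\log N,1/\eps)$ so that the inner brute-force becomes $(\log N)^{\poly(1/\eps)}$ are all exactly what the paper intends.
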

\subsection{Decoding against $1-\frac{2}{k+1}-\eps$ insertions/deletions} 
  First, we sumarize an existence result from \cite{BG2015}.
  \begin{lemma}[Theorem 12 of \cite{BG2015}]
    \label{bg-theorem-12}
    Fix an integer $k\ge 2$ and $\gamma>0$.
    Then there are infinitely many $N$ for which
    there is a concatenated Reed Solomon code $C\subseteq [k]^N$ that
    has outer rate at least $\gamma/2$,
    has total rate at least $(\gamma/k)^{O(\gamma^{-3})}$,
    is decodable under $1-\frac{2}{k+1}-\gamma$ fraction of
      insertions and deletions,
    has an inner code decodable under $1-\frac{2}{k+1}-\gamma/4$
      insertions and deletions,
    and is constructible in time $O(N\log^2 N)$.
  \end{lemma}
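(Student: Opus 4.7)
The plan is to build $C$ as a concatenated code, decoupling the rate and alphabet-size considerations. Take the outer code $C_{out}$ to be a Reed--Solomon code over $\FF_q$ of rate $\gamma/2$ with $n = q$, and pair each coordinate $c_i$ with its position $i$ as in Theorem \ref{highratetheorem}. The RS distance bound implies that distinct outer codewords agree in at most $(\gamma/2)n$ positions, and after position-pairing $\LCS(C_{out})$ equals the number of matching positions; thus $\LCS(C_{out}) \le (\gamma/2)n$ and the outer rate is $\gamma/2$ as required.

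The heart of the argument is the inner code $C_{in}\colon [n]\times\FF_q \to [k]^m$ with $m = \Theta_\gamma(\log n)$, rate $R = (\gamma/k)^{O(\gamma^{-3})}$, and decoding up to a $1 - 2/(k+1) - \gamma/4$ fraction of insertions/deletions, equivalently $\LCS(w,w') \le (2/(k+1) + \gamma/4)m$ for every pair of distinct inner codewords. Producing such an inner code is subtle because uniformly random $k$-ary codes fail: the Chv\'atal--Sankoff constant $\gamma_k \ge 2/(1+\sqrt{k})$ strictly exceeds $2/(k+1)$ for every $k \ge 2$, so random $k$-ary codewords typically have LCS too large. Instead, I would build the inner code by iteratively composing smaller constructions that beat the random benchmark, starting from a brute-force-constructed tiny code decoding close to the $1 - 2/(k+1)$ threshold (obtainable at very small block lengths by exhaustive search) and amplifying via careful code-concatenation steps whose LCS analysis loses only a $\gamma/4$ additive slack spread across $\Theta(\gamma^{-3})$ iterations, yielding the stated $(\gamma/k)^{O(\gamma^{-3})}$ rate. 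Since $m = O_\gamma(\log n)$, the inner search runs in $\poly(n)$ time, and combined with the $O(n\log^2 n)$-time RS encoder this gives total construction time $O_\gamma(N\log^2 N)$.

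Finally, establish the overall bound $\LCS(C) \le (2/(k+1) + \gamma) N$, which by the relation $\Delta(c,c') = 2N - 2\LCS(c,c')$ is equivalent to $C$ decoding a $1 - 2/(k+1) - \gamma$ fraction of insertions/deletions. The naive Lemma \ref{weakconcatlemma} yields only $\LCS(C) \le (\gamma/2 + 2(2/(k+1) + \gamma/4))N = (4/(k+1) + \gamma)N$, which is off by a factor of two; a sharper block-level monotone matching analysis---exploiting the fact that the inner code has nontrivial LCS bounds on \emph{all} substrings (not just whole codewords), together with the staircase structure forced by $\LCS(C_{out})$---yields the required tightening. The principal obstacle is constructing the inner code: achieving rate $(\gamma/k)^{O(\gamma^{-3})}$ together with a strong sub-Chv\'atal--Sankoff LCS guarantee forces the iterative or structured approach above in place of a plain probabilistic argument.
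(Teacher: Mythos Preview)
The paper does not prove this lemma at all: it is stated as a direct citation of Theorem 12 of \cite{BG2015} and used as a black box. So there is no in-paper proof to compare against; the intended ``proof'' here is simply the reference.

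That said, your sketch correctly identifies the architecture and the two genuine obstacles in \cite{BG2015}: (i) a random $k$-ary inner code does not reach LCS fraction $2/(k+1)$ because the Chv\'atal--Sankoff constant exceeds it, so one needs a structured (non-random) inner code; and (ii) the crude bound of Lemma~\ref{weakconcatlemma} loses a factor of two, so a finer matching analysis is required. However, both of these are exactly the places where your proposal is not a proof but a promissory note. For (i), ``iteratively composing smaller constructions \ldots\ amplifying via careful code-concatenation steps'' does not by itself produce an inner code with LCS below $2/(k+1)+\gamma/4$; in \cite{BG2015} the inner code is built from a specific family of highly unbalanced (``clean'') strings together with a delicate span-based LCS argument, and that content is doing all the work. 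For (ii), ``a sharper block-level monotone matching analysis exploiting that the inner code has nontrivial LCS bounds on all substrings'' is the right shape of the argument, but you have not supplied the span/self-similarity property of the inner code that makes such a substring-level bound hold, and without it the tightening fails. So as a proof the proposal has real gaps precisely at the two hard steps; as a reading guide to \cite{BG2015} it is on target.
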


  \begin{theorem}
    \label{higherroronetheorem}
    \higherroronetext
  \end{theorem}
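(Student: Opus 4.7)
The plan is to combine the combinatorial existence result in Lemma \ref{bg-theorem-12} with the algorithmic reduction in Theorem \ref{insertion-deletion-thm}. The only tunable parameter is the $\gamma$ fed into Lemma \ref{bg-theorem-12}, which I will pick as a polynomial in $\eps$ chosen so that applying Theorem \ref{insertion-deletion-thm} yields decodability from a $1-\frac{2}{k+1}-\eps$ fraction.

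Concretely, I would set $\gamma = c\eps^4$ for an absolute constant $c>0$ small enough that $\gamma/4 + 4(\gamma/2)^{1/4}\le \eps$; the fourth-power scaling is forced by the $4r^{1/4}$ slack in the hypothesis of Theorem \ref{insertion-deletion-thm}. With this $\gamma$, invoke Lemma \ref{bg-theorem-12} to obtain a concatenated Reed--Solomon code $C\subseteq[k]^N$ with outer rate $r=\gamma/2$, total rate $(\gamma/k)^{O(\gamma^{-3})} = (\eps/k)^{O(\eps^{-12})}$, inner code of block length $m = O_{k,\eps}(\log N)$ decodable from a $1-\delta$ fraction of insertions/deletions with $\delta = \frac{2}{k+1}+\gamma/4$, constructibility in time $O(N\log^2 N)$, and combinatorial decodability of $C$ itself under a $1-\frac{2}{k+1}-\gamma$ fraction of insertions and deletions.

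Next, I would feed this $C$ into Theorem \ref{insertion-deletion-thm}. The hypothesis requires $C$ to be combinatorially decodable up to a $1-\delta-4r^{1/4}$ fraction, and since $\delta+4r^{1/4} = \frac{2}{k+1}+\gamma/4+4(\gamma/2)^{1/4} \ge \frac{2}{k+1}+\gamma$ for all sufficiently small $\gamma$ (as $4(\gamma/2)^{1/4}$ dominates $3\gamma/4$), this is provided by Lemma \ref{bg-theorem-12}. The conclusion then yields an efficient decoder handling up to a $1-\delta-4r^{1/4} \ge 1-\frac{2}{k+1}-\eps$ fraction of insertions/deletions, as required. Because the inner code has block length $O_{k,\eps}(\log N)$, each inner encoding or LCS computation invoked by the algorithm of Theorem \ref{insertion-deletion-thm} runs in $\polylog N$ time, so the total decoding complexity is $O_{k,\eps}(N^3\polylog N)$, and construction time remains $O_{k,\eps}(N\log^2 N)$ from Lemma \ref{bg-theorem-12}.

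This proof is essentially plug-and-play, so there is no substantive obstacle; the only thing to track is the parameter accounting. The $4r^{1/4}$ slack in Theorem \ref{insertion-deletion-thm} and the $O(\gamma^{-3})$ exponent in Lemma \ref{bg-theorem-12} together force $\gamma = \Theta(\eps^4)$, which in turn is responsible for the $O(\eps^{-12})$ exponent in the final rate bound $(\eps/k)^{O(\eps^{-12})}$.
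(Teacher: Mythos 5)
Your proposal is correct and matches the paper's own proof essentially verbatim: the paper also plugs the code of Lemma \ref{bg-theorem-12} with $\gamma=\Theta(\eps^4)$ (specifically $\gamma=2(\eps/5)^4$) into Theorem \ref{insertion-deletion-thm}, with the same check that $\gamma/4+4(\gamma/2)^{1/4}\le\eps$ and the same rate accounting yielding $(\eps/k)^{O(\eps^{-12})}$.
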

  \begin{proof}
    Consider the codes $C$ given by Lemma \ref{bg-theorem-12}
    with $\gamma=2(\eps/5)^4$.
    $C$ has outer rate at least $\gamma/2=(\eps/5)^4$ and total rate
    at least $(\gamma/k)^{O(\gamma^{-3})}$.
    Furthermore,
    $C$ can decode up to $1-\frac{2}{k+1}-\gamma$ fraction of insertions/deletions,
    and the inner code of $C$ can decode $1-\frac{2}{k+1}-\gamma/4$
    fraction of insertions/deletions.
    Thus, by Theorem \ref{insertion-deletion-thm},
    $C$ can efficiently decode up to 
    $1-\frac{2}{k+1}-\gamma/4-4(\gamma/2)^{1/4} > 1-\frac{2}{k+1}-\eps$
    fraction of insertions/deletions.
  \end{proof}

\end{document}